\newcommand{\bbE}{\mathbb{E}}
\newtheorem{theorem}{Theorem}
\newtheorem{definition}{Definition}
\newtheorem{lemma}[theorem]{Lemma}
\begin{document}

% paper title
\title{\begin{huge}
Queuing Theoretic Analysis of Power-Performance Tradeoff in Power-Efficient Computing  \end{huge}}
\pagenumbering{arabic}

\author{
\authorblockN{Yanpei Liu, Stark C.~Draper, Nam Sung Kim}
\authorblockA{Electrical and Computer Engineering, University of Wisconsin Madison\\
Email: \{yliu73@, sdraper@ece, nskim3@\}wisc.edu}
}

%\author{\IEEEauthorblockN{Yanpei Liu}
%\IEEEauthorblockA{Electrical and Computer Engineering\\
%University of Wisconsin Madison\\
%Madison, WI, USA}
%\and
%\IEEEauthorblockN{Stark C.~Draper}
%\IEEEauthorblockA{Electrical and Computer Engineering\\
%University of Wisconsin Madison\\
%Toront }
%\and
%\IEEEauthorblockN{Nam Sung Kim}
%\IEEEauthorblockA{Electrical and Computer Engineering\\
%University of Wisconsin Madison\\
%Madison, WI, USA}}

\maketitle
\pagenumbering{arabic}

\begin{abstract}
In this paper we study the power-performance relationship of power-efficient computing from a queuing theoretic perspective. We investigate the interplay of several system operations including processing speed, system on/off decisions, and server farm size. We identify that there are oftentimes ``sweet spots" in power-efficient operations: there exist optimal combinations of processing speed and system settings that maximize power efficiency. For the single server case, a widely deployed threshold mechanism is studied. We show that there exist optimal processing speed and threshold value pairs that minimize the power consumption. This holds for the threshold mechanism with job batching. For the multi-server case, it is shown that there exist best processing speed and server farm size combinations. \\
\end{abstract}

\begin{IEEEkeywords}
Power-efficient computing, queuing theory, data center network
\end{IEEEkeywords}

\section{Introduction}

Large-scale data center networks have gained tremendous usage nowadays. Applications running inside such clustered severs include web searching, e-commerce, and compute-intensive applications. However, today's data centers spend a large amount of capital on power usage and other associated infrastructures. Around $40\%$ of total operation cost is related to power distribution, cooling and electricity bills \cite{CostOfCloud}. In $2005$, the total data center power consumption was $1\%$ of the total U.S.~power consumption and caused emissions as much as a mid-sized country such as Argentina \cite{MathewSitaraman}. Emphasizing the importance of these issues, we note that recently the U.S.~Environment Protection Agency raised concerns to the Congress about the growing power consumption in data centers \cite{USEPA}. 

Much power consumed by data centers is wasted: servers on average are only $10-50\%$ utilized \cite{CostOfCloud, VermaAhuja, MeisnerGold, BodikArmbrust}. Low utilization is epidemic to data center operations due to strict service level agreements on peak workload provisioning. However, due to the lack of ``power proportionality", an idling server still consumes $60\%$ of its peak power, drawn mainly in peripherals such as DRAM, hard disk drivers (HDDs), network interface card (NIC), etc. Thus, to conserve power it is preferable to shut down servers. When considering server farms consisting of multiple servers, jobs can be consolidated into a few servers so that the rest can be shut down. Server on/off decisions are often made in conjunction with processing speed adjustments. Dynamic voltage and frequency scaling (DVFS) is a conventional processing speed adjustment technique that changes the processor's clock frequency (and thus the speed of computation) according to workload conditions in order to reduce power consumption. Server on/off decisions (also known as dynamic component shut-down) and processor speed adjustment (also known as dynamic performance scaling) can be categorized in Figure~\ref{fig.category} (see Figure 5 in \cite{BeloglazovBuyyaLee} for a complete diagram).

\begin{figure}[t]
\includegraphics[scale=0.5]{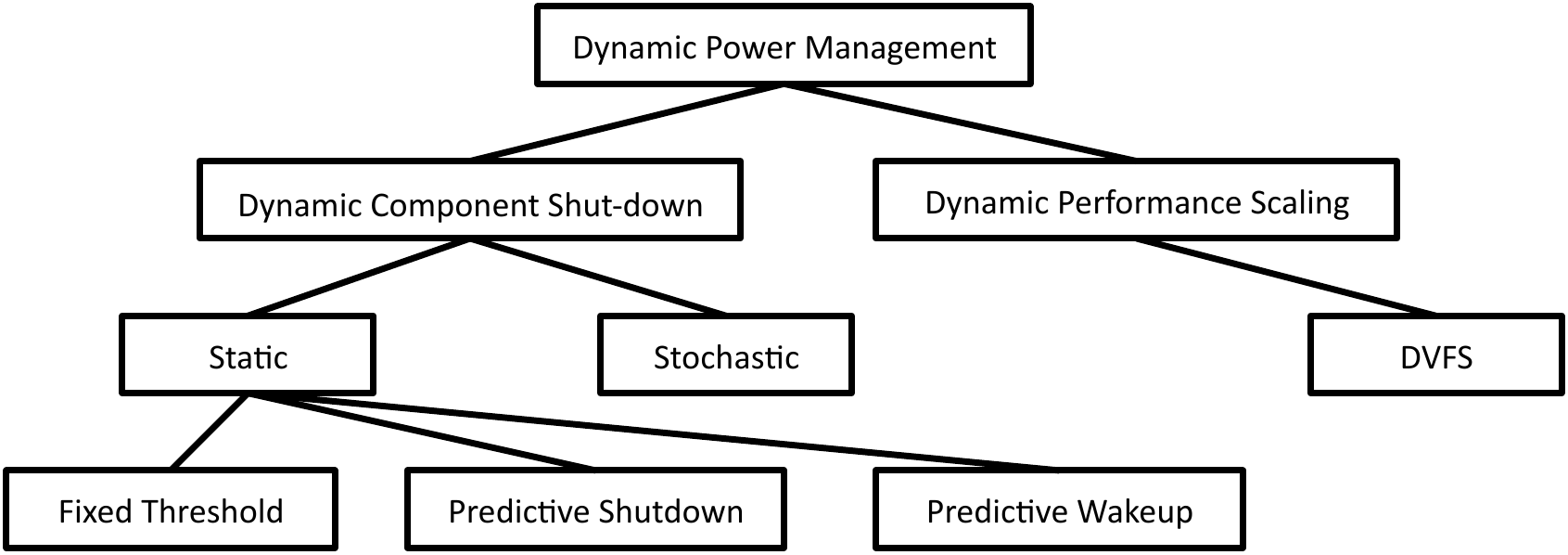}
\caption{Dynamic power management category \cite{BeloglazovBuyyaLee}}
\label{fig.category}
\end{figure}

Our results tie into many earlier works, both in the computer architecture and queuing theoretic communities. The authors in \cite{MeisnerGold} study a power saving method that shuts down servers when they are in idle and characterize the power-delay tradeoff from a queuing theoretic perspective. However they do not consider the performance scaling in their theoretic analysis. The authors in \cite{MeisnerSadler} investigate power reduction possibilities for jobs that demand fast response.  They suggest that system-wide coordinated power management provides a far better power-latency tradeoff than individual uncoordinated decisions. The work in \cite{LangPatel} also make similar statement. The authors study power management for MapReduce tasks, suggesting that all nodes in a MapReduce cluster should be powered up and down together rather than individually in a distributed fashion. The authors in \cite{MedanBuyuktosunoglu} highlight the challenges of avoiding negative power saving. Negative power savings occur when the overhead of implementing the power-savings mechanism exceeds the resulting savings, thus costing the system extra power. They suggest guard mechanisms to monitor negative power savings and performance degradation caused by those power saving routines. The impact of data center size on power efficiency is evaluated in \cite{GandhiHarchol}. Most of the above works consider variants of a fixed threshold mechanism.  In such mechanisms a server is shut down whenever it exceeds some idleness threshold. Stochastic on/off decisions are studied in \cite{Neely} and stochastic optimization methods are also used in \cite{YaoHuang, NeelyTehrani}. For other related works on predictive shut-down and wake-up, see \cite{BeloglazovBuyyaLee}.

% There are also other works that concern energy efficient jobs placement \cite{YuPrasanna, VermaAhuja, BeloglazovBuyya}, routing \cite{ShangLi, MathewSitaraman, QureshiWeber} and renewable energy usage \cite{UrgaonkarUrgaonkar, GoiriKatsak}. Many develop energy-aware routing, resource allocation and scheduling mechanisms using tools such as stochastic optimization \cite{Neely, YaoHuang, NeelyTehrani} and machine learning \cite{BodikArmbrust, BerralGoiri}.

Surprisingly, although component shut-down and performance scaling are widely used mechanisms in power-efficient computing, little is known from the queuing theoretic perspective, especially when component shut-down is jointly considered with performance scaling. The power-performance tradeoff in these settings is not well understood.  This often results in suboptimal designs. We aim to study the fundamental interplay between these system operations including processing speed, on/off decisions, and server farm size from a queuing theoretic point of view. Our results yield clear design guidance.  One result demonstrates that there are sweet spots in power-efficient computing.  These are optimal processing speed in combination with various other system parameter settings that yield the greatest power savings.  Somewhat surprisings these results contrast to much conventional wisdom that underlies many protocols such as the ``race-to-halt" mechanism.  Race-to-halt suggests that one run the processor as fast as possible and then shut it down.  In contrast, the sweet spots we identify show that it can be more power-efficient (for a given computational performance target) to run the processor more slowly for longer.  To develop these results, in this paper we first study the interplay between fixed-threshold reactive power control mechanism and DVFS to identify the optimal operation settings. The optimal settings also appear in the threshold mechanism with job batching, i.e., batching certain amount of jobs before system wake-up. We then extend the concept to the multi-server case where we consider the relation between server farm size and processing speed.  

The rest of the paper is organized as follows. In Section~\ref{sec.server_model} we present the server model. In Section~\ref{sec.singleserver} we present the analysis for the single server case. The muti-server case is discussed in Section~\ref{sec.multiserver}. We conclude in Section~\ref{sec.conclusion}.

\section{Server Model}
\label{sec.server_model}
We model each server as a computation entity that processes jobs. Each server is equipped with a DVFS mechanism. DVFS is a conventional method widely used to trade off power consumption with processing speed by changing operating voltage and clock frequency. We assume the clock frequency can be scaled by a factor $f \in [0, 1]$ and the time it takes to process each job under DVFS is exponentially distributed with mean $1/\mu f$. For simplicity, herein we assume the processing time for all jobs is independent and identically distributed. Setting $f = 1$ yields maximum processing speed $1/\mu$ and setting $f = 0$ stops the server from processing jobs, i.e., the server is in the clock-gated mode.

The dynamic power consumption of a system supporting DVFS is proportional to $ V^2 f $ where $V$ is the supply voltage and $f$ is the clock frequency scaling factor. The supply voltage is determined by frequency and can be reduced if the clock frequency is also reduced. This results in a cubic reduction in power consumption. Therefore we model the power consumed by the server as $P_0f^3 + C$ where $P_0$ is the maximum power draw from the computing entity itself, e.g.~CPU. The second term $C$ is the average power drawn by peripherals such as DRAM, hard disk drivers (HDDs), network interface card (NIC), etc. This can be thought of as the ``infrastructure" cost incurred by keeping the computational unit on and ready to process jobs. Note that when $f = 0$, i.e., the server is in the clock-gated mode: the power consumed by the server is the peripheral power $C$. This is different from the mode that the server is shut-down in which case the power consumption is zero. When the server is shut-down, there is a wake-up penalty in terms of time and power. For the ease of illustration, we model the peripheral power $C$ as independent of $f$. In practice, the peripheral power also depends on the system operation, exhibiting different values in active, idle and sleep modes \cite{MeisnerGold}.

\section{Single Server Analysis}
\label{sec.singleserver}
In this section we provide our analysis for the single server case. We assume jobs arrive according to a Poisson process with arrival rate $\lambda$. We study two conventional power-saving operations, namely the threshold mechanisms with and without job batching. 

%%%%%%%%%%%%%%%%%%%%%%%%%%%%%%%%%%%

%%%%%%%%%%%%%%%%%%%%%%%%%%%%%%%%%%%

\subsection{Threshold mechanism}
\label{sec.singleserver_threshold}
We first describe the mechanism without job batching. 
\begin{definition}[Threshold Mechanism]
\label{def.threshold_mechanism}
The server processes jobs until the queue is empty. Then it waits for a fixed amount of time threshold $\tau_c$. If the next job arrives {\em within} this waiting threshold $\tau_c$, the server processes the job and resumes normal operation. Otherwise the server shuts down in which the whole platform (CPU and the peripherals) is powered down consuming zero power. If the next job arrives {\em after} the waiting threshold $\tau_c$ (thus after the server has powered off), the server takes time $\tau_s$ to wake up before processing the job. 
\end{definition}
A pair of sample paths illustrating the operation of this mechanism is provided in Figure~\ref{fig.threshold_illustration}.  The upper sample path indicate queue occupancy.  The lower sample path is binary, indicating when the server is on and off. 
\begin{figure}[t]
\centering
\includegraphics[scale=0.45]{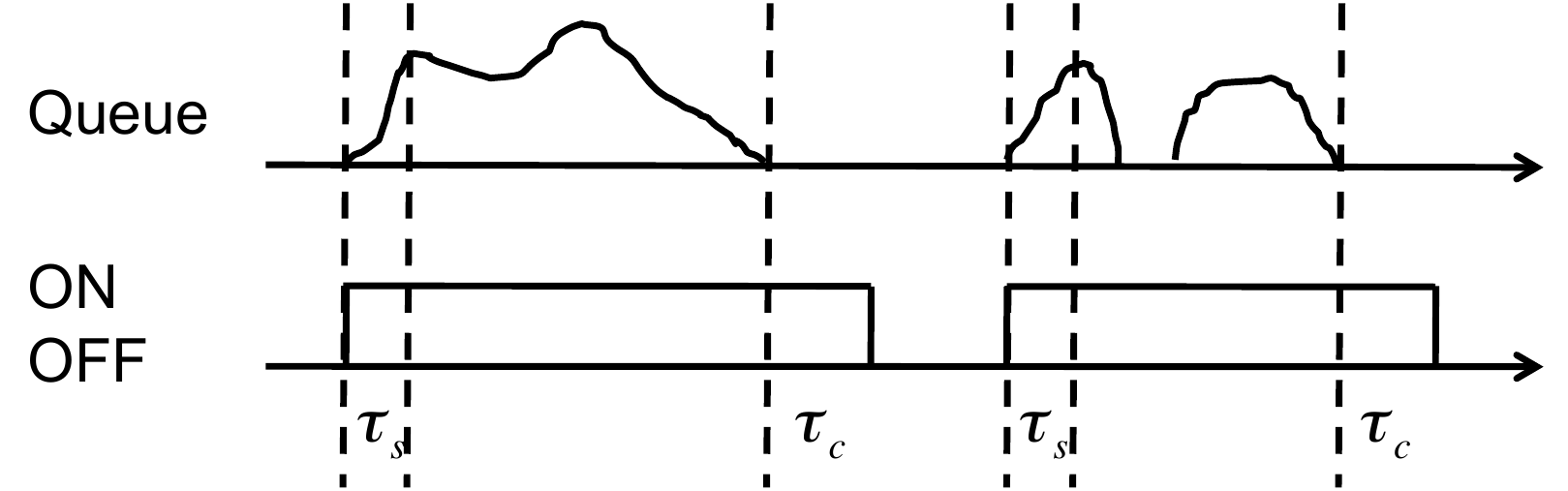}
\caption{Threshold mechanism.}
\label{fig.threshold_illustration}
\end{figure}
For the ease of illustration, we assume whenever the server is not shut-down, its power consumption is consistent over time determined by the frequency scaling $f$. Our analysis can be easily extended to the case where the power spent in $\tau_c$ and $\tau_s$ are different from the normal operation. 

Surprisingly for such a widely used mechanism, to the best of our knowledge, it has not been thoroughly studied from the queuing theoretic prospective. Indeed, it is not immediately clear how mean response time and power consumption are related under frequency scaling $f$ and peripheral power $C$. In current implementation, the threshold value $\tau_c$ is chosen as a fixed value mostly based on operators' own experience \cite{MedanBuyuktosunoglu}. We investigate how the waiting threshold $\tau_c$, frequency scaling $f$ and wake-up latency $\tau_s$ jointly affect the power and mean response time of such systems. We study this via a queuing theoretic analysis. Our results reveal that it is important to determine these operation parameters in a joint fashion. Na\"{i}vely picking $\tau_c$ too large or too small may lead to poor power efficiency. 

The following theorem summarizes the relationship between mean response time $\bbE[R]$ and power consumption $\bbE[P]$.
\begin{theorem}
\label{thm.threshold}
The mean response time and mean power consumption of a server using the threshold mechanism are given by:
\begin{align}
\label{eq.threshold_R}
\bbE[R] &= \frac{1}{\mu f - \lambda} + \frac{2\tau_s + \lambda \tau_s^2}{2(e^{\lambda \tau_c} + \lambda \tau_s)} \\
\label{eq.threshold_P}
\bbE[P] &= (P_0 f^3 + C) \left ( 1 - \frac{1 - \frac{\lambda}{\mu f}}{e^{\lambda \tau_c} + \lambda \tau_s } \right ).
\end{align}
\end{theorem}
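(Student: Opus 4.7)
The plan is to use regenerative (cycle) analysis, where a cycle begins each time the queue empties and the server starts the threshold countdown. Decompose each cycle into four phases: (i) an ``idle-on'' phase of random length $\min(A,\tau_c)$ with $A\sim\text{Exp}(\lambda)$; (ii) with probability $p := e^{-\lambda\tau_c}$, an off phase of length $\text{Exp}(\lambda)$ (by memorylessness of the arrival process); (iii) also with probability $p$, a wake-up phase of deterministic length $\tau_s$; and (iv) a busy period that is initiated by one job with probability $1-p$ and by $1+N$ jobs with $N\sim\text{Poisson}(\lambda\tau_s)$ with probability $p$. Using the classical M/M/1 fact that a busy period initiated by $k$ customers has mean length $k/(\mu f-\lambda)$, summing the phase means gives the expected cycle length $\bbE[T] = \mu f(1+p\lambda\tau_s)/(\lambda(\mu f-\lambda))$.

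The power formula follows immediately from renewal--reward. Since the server consumes $P_0f^3+C$ during all but the off phase, whose expected length per cycle is $p/\lambda$, the long-run average power equals $(P_0f^3+C)(1-(p/\lambda)/\bbE[T])$. Substituting $\bbE[T]$ and multiplying numerator and denominator by $e^{\lambda\tau_c}$ in the resulting ratio yields~\eqref{eq.threshold_P}.

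For the mean response time I would apply Little's law in regenerative form, $\bbE[R] = \bbE[A_L]/(\lambda\bbE[T])$, where $A_L := \int_0^{T}L(t)\,dt$ is the area under the queue-length sample path over one cycle. Only the wake-up and busy phases contribute. During wake-up, $L(t)$ equals $1$ plus a Poisson($\lambda t$) count of new arrivals, so the expected wake-up contribution is $p(\tau_s+\lambda\tau_s^2/2)$. For the busy-period contribution $\bbE[X(k)]$ when starting with $k$ initial jobs, I would derive the recursion $\bbE[X(k)] = \bbE[X(k-1)] + \bbE[X(1)] + (k-1)\bbE[B(1)]$ by peeling off service of each initial job in turn, noting that each still-waiting initial customer adds $1$ to $L(t)$ throughout the sub-busy-period spawned by the currently serving initial customer. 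Solving gives $\bbE[X(k)] = k\,\bbE[X(1)] + \binom{k}{2}\bbE[B(1)]$. The quantity $\bbE[X(1)] = \mu f/(\mu f-\lambda)^2$ is obtained by applying Little's law to the usual M/M/1 busy/idle regenerative cycle together with $\bbE[L] = \rho/(1-\rho)$. Averaging over the random $k=1+N$, using $\bbE[k(k-1)] = \lambda\tau_s(2+\lambda\tau_s)$, and adding the wake-up contribution assembles $\bbE[A_L]$.

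The main technical obstacle is the busy-period area calculation with random initial occupancy; once it is in hand the rest is bookkeeping. Dividing $\bbE[A_L]$ by $\lambda\bbE[T]$, the factor $1+p\lambda\tau_s$ cancels cleanly between numerator and denominator, leaving $\bbE[R] = (\mu f-\lambda)^{-1} + p\tau_s(2+\lambda\tau_s)/(2(1+p\lambda\tau_s))$, and multiplying the top and bottom of the second summand by $e^{\lambda\tau_c}$ recovers~\eqref{eq.threshold_R}.
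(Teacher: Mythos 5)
Your proposal is correct, and for the response-time formula it takes a genuinely different route from the paper. The paper simply invokes Welch's formula for an M/G/1 queue whose first customer in each cycle suffers an exceptional initial delay $D$ (equation (\ref{eq.MG1_exceptional})), computes $\bbE[D]=\tau_s e^{-\lambda\tau_c}$ and $\bbE[D^2]=\tau_s^2 e^{-\lambda\tau_c}$, and substitutes; you re-derive that special case from first principles via renewal--reward and Little's law, with the busy-period area recursion $\bbE[X(k)]=k\,\bbE[X(1)]+\binom{k}{2}\bbE[B(1)]$ doing the work that Welch's cited result does for the paper. Your route is longer but self-contained and makes transparent where the extra delay term comes from (waiting accumulated during wake-up plus the additional sub-busy periods spawned by the batched arrivals); the paper's is a two-line substitution that leans on an external theorem. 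For the power formula the two arguments are essentially identical: your expected cycle length $\mu f(1+p\lambda\tau_s)/(\lambda(\mu f-\lambda))$ is the same $L$ the paper obtains from the flow-balance identity $\lambda L=\mu f(L-1/\lambda-\bbE[D])$, and your expected off-time per cycle $p/\lambda$ equals the paper's $\int_{\tau_c}^{\infty}(t-\tau_c)\lambda e^{-\lambda t}\,dt$. Two small points of care: the identity $\bbE[k(k-1)]=\lambda\tau_s(2+\lambda\tau_s)$ holds only conditionally on the shut-down event and must still be weighted by $p$ (your final expression is consistent with having done this), and the factor $1+p\lambda\tau_s$ cancels only in the first summand of $\bbE[A_L]/(\lambda\bbE[T])$ --- it survives in the denominator of the second summand, which is precisely what produces the $e^{\lambda\tau_c}+\lambda\tau_s$ in (\ref{eq.threshold_R}).
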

\begin{proof}
It is shown that the mean response time for an M/G/1 queue with the first customer experiencing a random delay $D$ is given by \cite{Welch}:
\begin{align}
\label{eq.MG1_exceptional}
\bbE[R] = \frac{1}{f \mu} + \frac{\lambda (1 + c_s^2)}{2 f^2 \mu^2 \left ( 1 - \frac{\lambda}{f \mu} \right)} + \frac{2 \bbE[D] + \lambda \bbE[D^2] }{2(1 + \lambda \bbE[D])},
\end{align}
where $c_s^2$ is the variance of coefficient. The random delay $D$ in our case is $D = 0$ if $0 \leq T \leq \tau_c$ and $D = \tau_s$ if $T > \tau_c$ where $T$ is the time elapse to see the first arrival after the server runs out of jobs. The random variable $T$ is exponentially distributed with parameter $\lambda$. Therefore $\bbE[D]$ can be calculated as:
\begin{align}
\bbE[D] = \int_{\tau_c}^{\infty} \tau_s \lambda e^{-\lambda t} dt = \tau_s e^{-\lambda \tau_c}.
\end{align}
Similarly, $\bbE[D^2] = \tau_s^2 e^{-\lambda \tau_c}$. Plugging them into (\ref{eq.MG1_exceptional}) with $c_s^2 = 1$ for M/M/1 we obtain the mean response time (\ref{eq.threshold_R}).

The power expression can be derived as follows. Note that 
\begin{align}
\label{eq.6}
\bbE[P] = (P_0f^3 + C) (1 - f_{\rm off}),
\end{align}
where $f_{\rm off}$ is the fraction of the time the server is off. Now consider a time duration $L$ from the end of one epoch that the queue is empty to the end of next epoch that the queue is empty. Since this time duration starts with zero job and ends with zero job in the queue, the following equality holds:
\begin{align}
\lambda L = \mu f \left(L - \frac{1}{\lambda} - \bbE[D]\right).
\end{align}
Within this time duration, the server will shut down only when next job arrives after $\tau_c$. Thus $f_{\rm off}$ can be calculated as:
\begin{align}
f_{\rm off} = \frac{\int_{\tau_c}^{\infty} (t - \tau_c) \lambda e^{-\lambda t} dt}{L}.
\end{align} 
Plugging in $f_{\rm off}$ into (\ref{eq.6}) we obtain the power consumption (\ref{eq.threshold_P}).
\end{proof}

From Theorem~\ref{thm.threshold}, we have the following observations. First when $\tau_c = \infty$, the server never shuts down. The mean response time (\ref{eq.threshold_R}) and power consumption (\ref{eq.threshold_P}) reduce to:
\begin{align}
\bbE[R] = \frac{1}{\mu f - \lambda} \quad \quad  \bbE[P] = (P_0 f^3 + C), 
\end{align}
which is the mean response time and power consumption for an M/M/1 queue with frequency scaling $f$. 

When $\tau_s = 0$, i.e., the server incurs no delay to wake up. The mean response time reduces to an M/M/1 case while the power consumption can be minimized by picking $\tau_c = 0$. Thus we have:
\begin{align}
\label{eq.RP_Ts0}
\bbE[R] = \frac{1}{\mu f - \lambda} \quad \quad \bbE[P] = \frac{\lambda}{\mu f} (P_0 f^3 + C).
\end{align}
This means that if there is no cost to wake up a server, the server should shut down immediately when the queue becomes empty. However, note that the power-delay tradeoff is not monotonic: there is an optimal frequency that minimizes the power consumption (c.f.~Figure~\ref{fig.Ts0Tc0}). In other words, it is not always the case that running slow (while incurring large delay) leads to more power savings.

For a fixed nonzero $\tau_s$, there is an optimal $(\tau_c, f)$ pair that minimizes the power consumption for a given delay performance. To see this, fix $\bbE[R] = R'$ and from (\ref{eq.threshold_R}) we obtain the relationship between $f$ and $\tau_c$:
\begin{align}
\frac{1}{e^{\lambda \tau_c} + \lambda \tau_s} = \frac{2}{2\tau_s + \lambda \tau_s^2} \left(R' - \frac{1}{\mu f - \lambda}\right).
\end{align}
Plugging it into (\ref{eq.threshold_P}), we see that the optimal frequency scaling $f$ is the one that minimizes the following:
\begin{align}
\label{eq.P_Tsnot0}
\bbE[P] = (P_0f^3 + C) \!\! \left [ 1 - \frac{2 \! \left(1 \! - \! \frac{\lambda}{\mu f} \right )}{2\tau_s \! + \! \lambda \tau_s^2} \! \! \left( \!\! R' - \frac{1}{\mu f - \lambda} \right ) \! \right].
\end{align}
Thus we have an optimal $(\tau_c, f)$ pair (c.f.~Figure~\ref{fig.Tsnot0}). This suggests that one should not set $\tau_c$ and $f$ independently: they are coupled and depend on the quality of service requirement. 

The race-to-halt mechanism is a special case of this threshold mechanism with $\tau_c = 0$ and $f = 1$. That is, the server runs as fast as it could when the queue starts to build up and shuts down immediately after it clears all the jobs. The mean response time and power consumption reduce to:
\begin{align}
\label{eq.R_RaceHalt}
\bbE[R] &= \frac{1}{\mu - \lambda} + \frac{\tau_s}{2(1 + \lambda \tau_s)} + \frac{\tau_s}{2} \\
\label{eq.P_RaceHalt}
\bbE[P] &= (P_0 + C) \left(1 - \frac{1 - \frac{\lambda}{\mu}}{1 + \lambda \tau_s} \right ).
\end{align} 
The power consumption (\ref{eq.P_RaceHalt}) is a monotonically increasing function with respect to $\lambda$. However for mean response time, there is a $\lambda$ that minimizes the delay.

%%%%%%%%%%%%%%%%%%%%%%%%%%%%%%%%%%%%%%%%%%%%%%

%%%%%%%%%%%%%%%%%%%%%%%%%%%%%%%%%%%%%%%%%%%%%%

\subsection{Threshold mechanism with job batching}
\label{sec.singleserver_threshold_batching}
In this section we extend our analysis to consider the threshold mechanism with job batching. 
\begin{definition}[Threshold Mechanism with Job Batching]
\label{def.threshold_algorithm_job_batching}
This mechanism is the same as the one in Definition~\ref{def.threshold_mechanism} with the following difference. When the shut-down server sees the first job arrival, the server remains shut-down for some additional time $\tau_w$ before waking up. As before, wake-up takes times $\tau_s$.
\end{definition}

As we did for the basic threshold mechanism, in Figure~\ref{fig.threshold_batching_illustration} we provide a pair of sample paths illustrating the operation of the modified mechanism.  The upper sample path indicate queue occupancy.  The lower sample path is binary, indicating when the server is on and off.  Note the additional parameter vis-\`{a}-vis the basic mechanism.
\begin{figure}[t]
\centering
\includegraphics[scale=0.45]{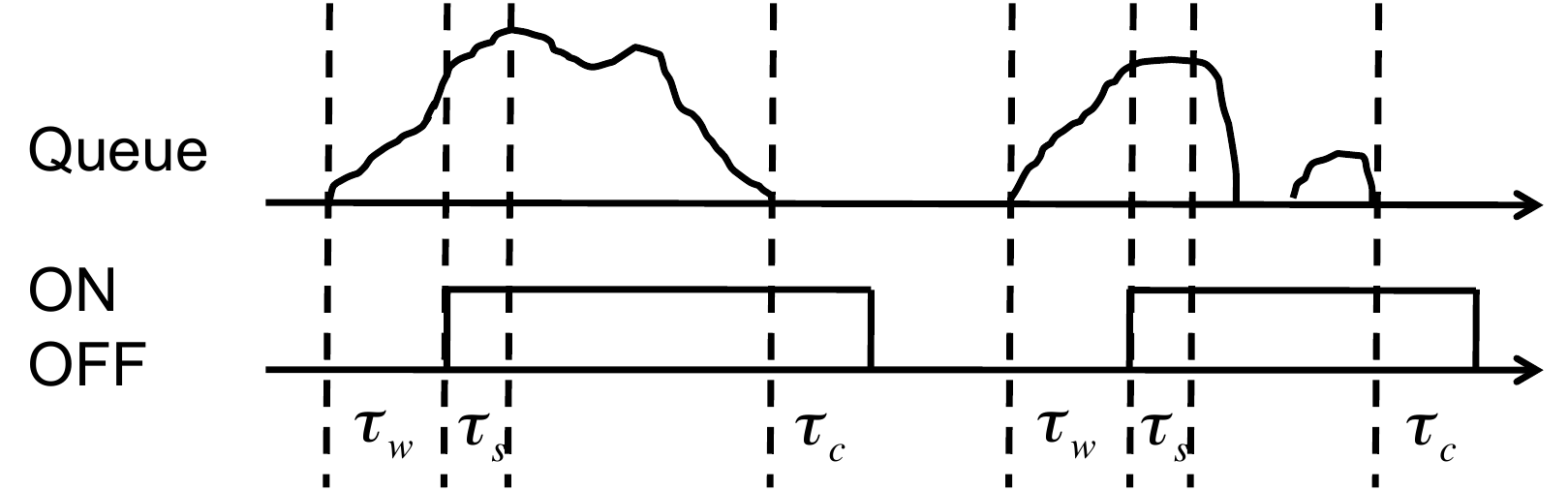}
\caption{Threshold mechanism with job batching.}
\label{fig.threshold_batching_illustration}
\end{figure}
The intuition behind this mechanism is that by batching more jobs at the beginning, it is less likely that the server will run out of jobs in the near future. This mechanism is the spirit in the periodic power-on and power-off operation in MapReduce clusters and the idea of batching database queries (see \cite{LangPatel} and the references therein). However, it is not clear how $\tau_w$ affects power and mean response time and the relation with $f$, $\tau_c$ and $\tau_s$ is unknown. We derive the mean response time and power consumption for this threshold mechanism with job batching in Lemma~\ref{thm.threshold_batching}.
\begin{lemma}
\label{thm.threshold_batching}
The mean response time and power consumption of the threshold mechanism with job batching are
\begin{align}
\label{eq.threshold_batching_R}
\bbE[R] &= \frac{1}{\mu f - \lambda} + \frac{2(\tau_s + \tau_w) + \lambda (\tau_s + \tau_w)^2 }{2(e^{\lambda \tau_c} + \lambda (\tau_s + \tau_w) ) } \\
\label{eq.threshold_batching_P}
\bbE[P] &= (P_0 f^3 + C)\left ( 1 - \frac{(1 + \lambda \tau_w) \left (1 - \frac{\lambda}{\mu f} \right )}{e^{\lambda \tau_c} + \lambda (\tau_s + \tau_w)} \right ).
\end{align}
\end{lemma}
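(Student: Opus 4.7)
The plan is to essentially reuse the argument of Theorem~\ref{thm.threshold}, modifying only the two quantities that depend on the new parameter $\tau_w$: the exceptional first-customer delay $D$ that feeds into the Welch M/G/1 formula (\ref{eq.MG1_exceptional}), and the per-epoch off-time integral that feeds into $f_{\rm off}$.

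First I would revisit the definition of the exceptional delay. Let $T$ be the time between the queue emptying and the next arrival; $T \sim \mathrm{Exp}(\lambda)$. In the batched mechanism, if $T \leq \tau_c$ the server has not yet powered off and the arriving job enters service immediately, so $D = 0$. If $T > \tau_c$, the server is shut down; upon the arrival it remains off for $\tau_w$ (batching window) and then takes $\tau_s$ to wake up before serving, so $D = \tau_s + \tau_w$. Thus $\bbE[D] = (\tau_s + \tau_w) e^{-\lambda \tau_c}$ and $\bbE[D^2] = (\tau_s + \tau_w)^2 e^{-\lambda \tau_c}$. Plugging these into (\ref{eq.MG1_exceptional}) with $c_s^2 = 1$ and simplifying in the same way as in the proof of Theorem~\ref{thm.threshold} (first two terms collapse to $1/(\mu f - \lambda)$, then multiply numerator and denominator of the third term by $e^{\lambda \tau_c}$) yields (\ref{eq.threshold_batching_R}).

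For the power, I would again write $\bbE[P] = (P_0 f^3 + C)(1 - f_{\rm off})$ and consider one regeneration epoch $L$ between successive queue-empty events. Flow conservation on the epoch gives $\lambda L = \mu f \bigl(L - 1/\lambda - \bbE[D]\bigr)$ with the new $\bbE[D]$, which can be solved to give $\lambda L = \mu f \bigl(1 + \lambda(\tau_s+\tau_w) e^{-\lambda \tau_c}\bigr)/(\mu f - \lambda)$. For the expected off-time per epoch, the server is powered down during $[\tau_c, T]$ and then throughout the batching window $\tau_w$ (but not during the wake-up $\tau_s$, since by assumption power is consumed whenever the server is not shut down); hence the expected off-duration per epoch is
\begin{align}
\int_{\tau_c}^{\infty} (t - \tau_c + \tau_w)\, \lambda e^{-\lambda t}\, dt = \left(\tfrac{1}{\lambda} + \tau_w\right) e^{-\lambda \tau_c}.
\end{align}
Dividing by $L$ and substituting produces $f_{\rm off} = (1+\lambda \tau_w)(1 - \lambda/(\mu f))/\bigl(e^{\lambda \tau_c} + \lambda (\tau_s + \tau_w)\bigr)$, from which (\ref{eq.threshold_batching_P}) follows.

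The main obstacle, and the only place where the batched mechanism departs from the one in Theorem~\ref{thm.threshold} in a non-mechanical way, is the bookkeeping around what counts as ``off'' during an epoch: the $\tau_w$ window must be included in the off-integral (because the server is still clock-gated) while the $\tau_s$ wake-up window must not. Getting this right is what produces the asymmetric appearance of $\tau_w$ in the numerator versus $\tau_s+\tau_w$ in the denominator of (\ref{eq.threshold_batching_P}); everything else is a direct substitution into the pieces already established for the non-batched case.
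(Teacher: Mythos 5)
Your proposal is correct and follows essentially the same route as the paper: replace the exceptional first-customer delay by $D = \tau_s + \tau_w$ on the event $\{T > \tau_c\}$ in the Welch formula, and augment the per-epoch off-time by the batching window $\tau_w$ (your single integral $\int_{\tau_c}^{\infty}(t-\tau_c+\tau_w)\lambda e^{-\lambda t}\,dt$ is just the paper's two integrals combined), with the wake-up interval $\tau_s$ correctly excluded from the off-time in accordance with the paper's power convention. No gaps.
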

\begin{proof}
The proof follows from the one in Theorem~\ref{thm.threshold}. In particular, the random delay $D$ now becomes $D = 0$ if $0 \leq T \leq \tau_c$ and $D = \tau_s + \tau_w$ if $T > \tau_c$. We obtain (\ref{eq.threshold_batching_R}) by solving for $\bbE[D]$ and $\bbE[D^2]$ and plugging in (\ref{eq.MG1_exceptional}) with $c_s^2 = 1$. The power consumption can also be derived in the same way as in Theorem~\ref{thm.threshold} with $f_{\rm off}$ replaced by:
\begin{align}
f_{\rm off} &= \frac{\int_{\tau_c}^{\infty} (t - \tau_c) \lambda e^{-\lambda t} dt + \int_{\tau_c}^{\infty} \tau_w \lambda e^{-\lambda t} dt}{L}.
\end{align} 
The rest of the proof follows from the one in Theorem~\ref{thm.threshold}.
\end{proof}
Note that when $\tau_w = 0$, the system reduces to the threshold mechanism. When $\tau_w$ is very large, the system waits long period of time before waking up: the mean response time thus goes unbounded and the power consumption converges to $\frac{\lambda}{\mu f} ( P_0 f^3 + C)$. 

Under a certain mean response time budget $\bbE[R] = R'$, there is an optimal triple $(\tau_c, f, \tau_w)$ that minimizes the power consumption. In particular, when $\tau_c = 0$, the mean response time and power consumption reduce to:
\begin{align}
\label{eq.threshold_batching_RtoH_R}
\bbE[R] &= \frac{1}{\mu f - \lambda} + \frac{2(\tau_s + \tau_w) + \lambda (\tau_s + \tau_w)^2 }{2(1 + \lambda (\tau_s + \tau_w) ) } \\
\label{eq.threshold_batching_RtoH_P}
\bbE[P] &= (P_0 f^3 + C)\left ( 1 - \frac{(1 + \lambda \tau_w) \left (1 - \frac{\lambda}{\mu f} \right )}{1 + \lambda (\tau_s + \tau_w)} \right ).
\end{align}
Further with $f = 1$, the threshold mechanism reduces to the race-to-halt mechanism with job batching.  We simulate its mean response time (\ref{eq.threshold_batching_RtoH_R}) and power consumption (\ref{eq.threshold_batching_RtoH_P}) in Figure~\ref{fig.BatchTc0}.  

\subsection{Simulation Results}
\label{sec.simulationresults}
In this section we present our simulation results for the fixed threshold mechanisms. We choose the simulation parameters in real data traces from many literatures (see \cite{MeisnerGold} and the references therein). 

\subsubsection{Threshold mechanism}
We consider a computing facility with $P_0 = 150$, $\mu = 1$ and $\lambda = 0.1$ which models low utilization scenario. If the wake-up cost is negligible, i.e., $\tau_s = 0$, then from previous analysis we have $\tau_c = 0$ and the mean response time and power consumption reduce to (\ref{eq.RP_Ts0}). Figure.~\ref{fig.Ts0Tc0} illustrates the power-delay tradeoff for various $C$ when $\tau_s = \tau_c = 0$. 
\begin{figure}[t]
\centering
\includegraphics[scale=0.48]{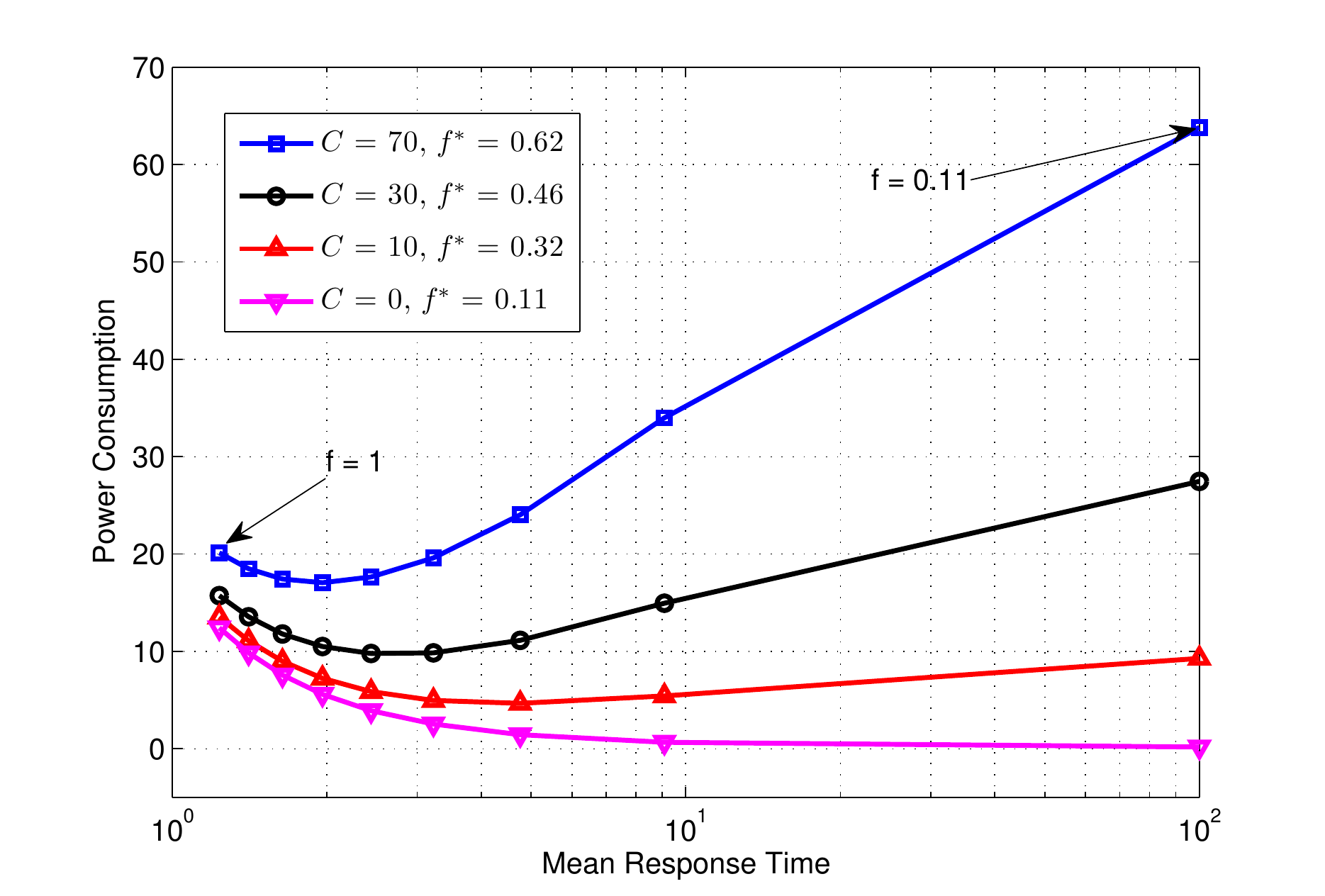}
\caption{Threshold mechanism, $\tau_s = \tau_c = 0$. Note that there is an optimal frequency $f^*$ that minimizes the power consumption.}
\label{fig.Ts0Tc0}
\end{figure}
Notice that there is an optimal frequency scaling that minimizes the power consumption. The results suggest that running jobs at large delay (using low frequency) may actually consume more power to run. 

In a more realistic scenario where $\tau_s \neq 0$, Figure~\ref{fig.Tsnot0} validates our argument that there is an optimal $(\tau_c, f)$ pair that jointly minimizes the power consumption given a target mean response time (c.f.~(\ref{eq.P_Tsnot0})). We set $P_0 = 150$, $C = 70$, $\lambda = 0.1$, $\mu = 1$ and $\tau_s = 10$.
\begin{figure}[t]
\centering
\includegraphics[scale=0.48]{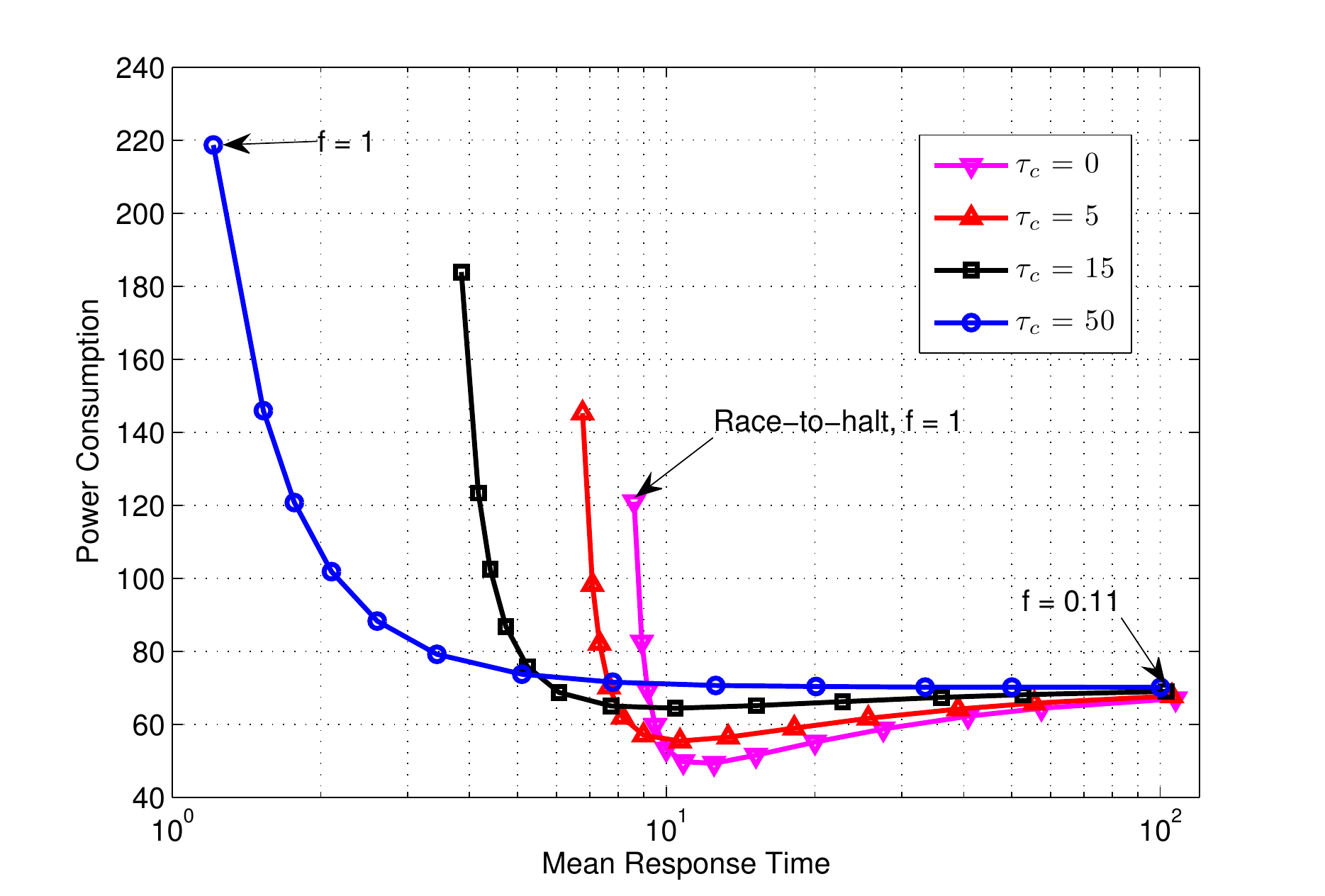}
\caption{Threshold mechanism, $\tau_s = 10$. Different target delay corresponds to different $\tau_c$ and $f$ pair.}
\label{fig.Tsnot0}
\end{figure}
Notice that for a given mean response time $R'$, there is an optimal $\tau_c$ and an associated frequency scaling $f$ that minimize the power consumption. Note also that for the mean response time achieved by the race-to-halt mechanism ($\tau_c = 0$, $f = 1$), we can pick another $(\tau_c, f)$ pair that yields smaller power consumption.

\subsubsection{Threshold mechanism with job batching}
We simulate the mean response time (\ref{eq.threshold_batching_RtoH_R}) and power (\ref{eq.threshold_batching_RtoH_P}) for the threshold mechanism with job batching. We set $P_0 = 150$, $C = 70$, $\lambda = 0.1$, $\mu = 1$, $\tau_s = 10$ and $\tau_c = 0$. The frequency scaling $f$ and batching period $\tau_w$ are kept as variables. The power-delay tradeoff is shown in Figure~\ref{fig.BatchTc0}.
\begin{figure}[t]
\centering
\includegraphics[scale=0.48]{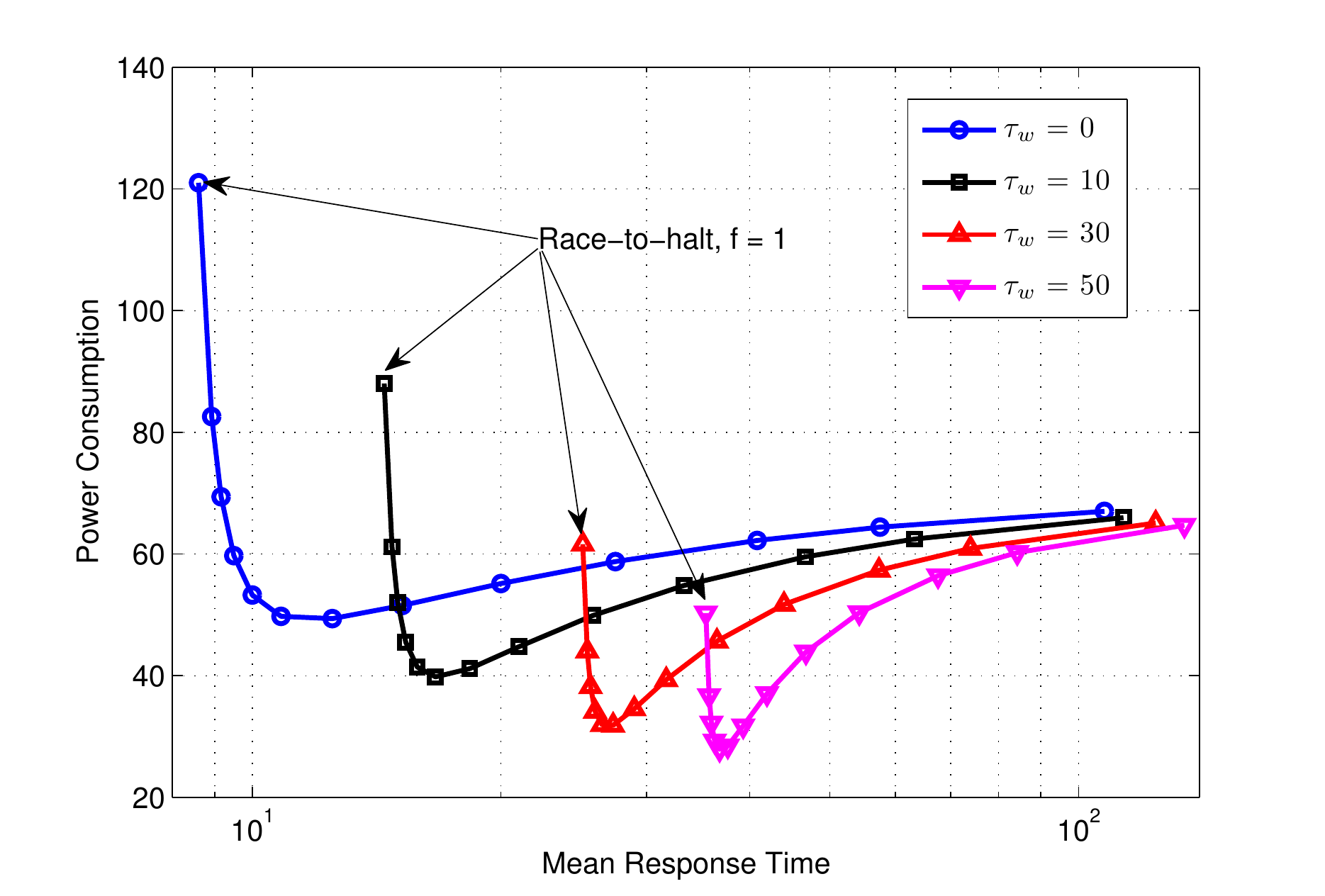}
\caption{Threshold mechanism with job batching, $\tau_s = 10$ and $\tau_c = 0$. Different target delay corresponds to different $\tau_w$ and $f$ pair. Note that the curve with $\tau_w = 0$ is the same as the one with $\tau_c = 0$ in Figure~\ref{fig.Tsnot0}.}
\label{fig.BatchTc0}
\end{figure}
Notice that for some mean response time achieved by the race-to-halt mechanism, we can pick another $(\tau_c, f)$ pair that yields smaller power consumption. The intuition is that to save power, one typically prefers smaller $f$ over $f = 1$. However to maintain the same delay performance one needs to compensate the increase in delay caused by the smaller $f$ by picking a smaller $\tau_w$. Meanwhile, one should not decrease $f$ too much either as doing so the peripheral power $C$ will soon be the dominating factor. We also note that the power-delay tradeoff is monotonic for race-to-halt scheme: increasing $\tau_w$ always incurs larger delay and lower power consumption.
\section{Multi-server Analysis}
\label{sec.multiserver}
In this section we extend our queuing analysis to study the interplay between frequency scaling and facility plant size, i.e., the number of servers. We study two simple multi-server scenarios, namely flow splitting and job splitting. We observe that even in such simple settings there are optimal operating frequency and plant size pairs that minimize the power consumption. 

Consider $n$ parallel homogeneous servers with a centralized job dispatcher. Jobs arrive at the dispatcher according to a Poisson process with rate $\lambda$. The job dispatcher distributes jobs to servers according to some rules. In this section we consider two simple rules: flow splitting using Bernoulli splitting and job splitting using fork-join. We assume all servers use the same operating frequency scaling $f$, each consuming $P_0 f^3 + C$ amount of power.

\subsection{Flow splitting}
In the flow splitting case, the job dispatcher sends jobs to servers according to a Bernoulli splitting manner. Each server behaves as an M/M/1 queue with Poisson arrival rate $\lambda/n$. 
\begin{lemma}
\label{thm.flowsplitting}
The mean response time and power consumption of flow splitting multi-server system are:
\begin{align}
\label{eq.flowsplit_PN}
\bbE[R] = \frac{1}{f \mu - \frac{\lambda}{n}} \quad \quad \bbE[P] = n (P_0 f^3 + C).
\end{align}
\end{lemma}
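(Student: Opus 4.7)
The plan is to exploit the standard splitting property of Poisson processes together with the classical M/M/1 formula. First, I would invoke the Bernoulli-splitting property of a Poisson arrival: when the dispatcher routes each arrival independently to one of $n$ servers with probability $1/n$, the arrival stream seen at each individual server is itself Poisson with rate $\lambda/n$, and the streams at distinct servers are mutually independent. Combined with the assumption that service times at each server are exponential with mean $1/(\mu f)$ and independent across servers, this immediately identifies each server as an M/M/1 queue with arrival rate $\lambda/n$ and service rate $\mu f$.

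Second, for the delay expression I would simply apply the well-known mean-response-time formula for M/M/1, namely $\bbE[R] = 1/(\text{service rate} - \text{arrival rate})$, to obtain $\bbE[R] = 1/(\mu f - \lambda/n)$. Because every server is statistically identical and the dispatcher assigns each job to one server, the overall mean response time (averaged over all jobs) coincides with the per-server expression, so no additional averaging step is needed. Implicit stability requires $\lambda/n < \mu f$, which I would note in passing.

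Third, for the power expression I would observe that, under the flow-splitting rule described, no shut-down mechanism is imposed on the individual servers: each server is kept active at frequency scaling $f$ at all times, drawing instantaneous power $P_0 f^3 + C$ per the server model in Section~\ref{sec.server_model}. Summing over the $n$ identical servers, and using linearity of expectation, yields $\bbE[P] = n(P_0 f^3 + C)$.

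There is essentially no hard step here; the only subtlety worth flagging is that the result treats each server as always-on, so the analysis does not incorporate the threshold or batching mechanisms of Section~\ref{sec.singleserver}. If desired, a one-line remark could point out that if each server additionally ran the threshold mechanism of Definition~\ref{def.threshold_mechanism}, one could instead plug the per-server arrival rate $\lambda/n$ into Theorem~\ref{thm.threshold} and multiply the power by $n$, but under the pure flow-splitting assumption the formulas above are immediate.
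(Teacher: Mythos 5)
Your proof is correct and matches the paper's (essentially omitted) justification: the paper simply notes that under Bernoulli splitting each server behaves as an M/M/1 queue with arrival rate $\lambda/n$, which is exactly the argument you spell out via Poisson thinning, the standard M/M/1 response-time formula, and the always-on power model. Your added remarks on stability ($\lambda/n < \mu f$) and on the absence of any shut-down mechanism are appropriate clarifications, not deviations.
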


For any given $\bbE[R] = R'$, simple algebraic calculations show that there is an optimal frequency scaling and plant size pair that minimizes the power consumption. In particular, in large delay region $R' = \infty$, the optimal frequency scaling $f$ and plant size $n$ are given by:
\begin{align}
\label{eq.optimal_flowsplit_freqn}
f= \sqrt[3]{\frac{C}{2P_0}} \quad \quad n = \frac{\lambda}{\mu f}.
\end{align} 
This suggests that for power-efficient computation, it is not necessarily true that running as fast as possible or consolidating jobs onto as few servers as possible offers a better power efficiency. This phenomenon is visualized in Figure~\ref{fig.flowsplit}. We conjecture that similar observations exist for round robin scheduling where the inter-arrival time between jobs is Erlang-n distributed.

\subsection{Job splitting}
In the job splitting case, upon a job arrival the job dispatcher immediately makes $n$ copies of the job and forks them in parallel to $n$ servers. This models the queries to content retrieval databases where each incoming request can be simultaneously routed to $n$ databases waiting for some of them to respond.  Servers process requests in parallel and one queue is maintained at each server. When any $k$ out of $n$ servers respond, the rest of $n-k$ servers abandon the corresponding requests and the job departs the system. Such system is often termed $(n, k)$ fork-join queue \cite{NelsonTantawi} in queuing theory literature. There is no known close form solution for the mean response time of the fork-join system, not even for $(n, n)$ system. However several bounds exist (for example, see \cite{JoshiLiu}). For the job splitting case, working with the bounds we notice that there is also an optimal frequency scaling $f$ and plant size $n$ combination such that the power is minimized for a given delay budget. 

In both flow splitting and job splitting cases, packing jobs onto fewer servers requires faster processor speed to maintain a given delay performance thus increasing the processing power $P_0f^3$. On the other hand, provisioning more servers always incurs the fixed peripheral power expenditure $C$. 

\subsection{Simulation Results}
We simulate the mean response time and power consumption for multi-server flow splitting case. The case for job splitting shares the same spirit (omitted due to page limits). We set $P_0 = 150$, $C = 10$, $\lambda = 0.7$ and $\mu = 1$ while the frequency scaling $f$ and the number of servers $n$ are kept as variables. Simulation results are shown in Figure~\ref{fig.flowsplit}.
\begin{figure}[t]
\centering
\includegraphics[scale=0.48]{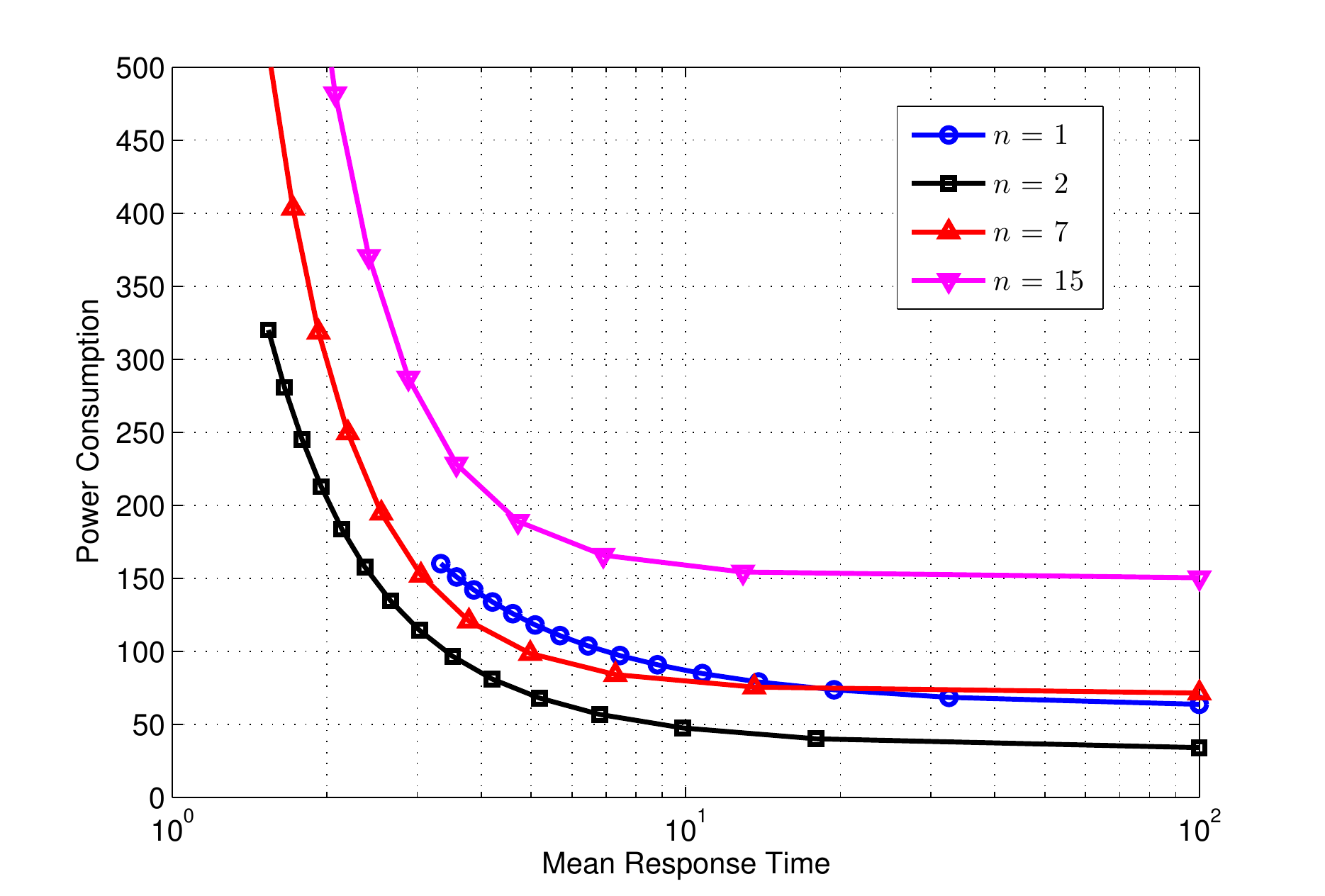}
\caption{Power delay tradeoff with flow splitting.}
\label{fig.flowsplit}
\end{figure}
For each $n$, we simulate different frequency scaling $f$ to plot the curve. Note that for some fixed mean response time, the power consumption first decreases then increases with increasing $n$. Intuitively, in one extreme case where jobs can tolerate large delay, the system should run slowly with small amount of servers (c.f.~(\ref{eq.optimal_flowsplit_freqn})). In another extreme case where jobs demand fast response, the system should run faster with many severs powered on. 
\section{Conclusion and Future Work}
\label{sec.conclusion}
In this paper we present a queuing theoretic analysis of some widely used power-efficient operations in modern computing. We analytically characterize the power-delay tradeoff for the threshold mechanisms with and without job batching. We also analyze the multi-server case. For these mechanisms we discover that there oftentimes exist sweet spots: optimal combinations of processing speed and other system parameters that yield best power efficiency. 

There are many promising future directions. These include the investigation of other power-efficient mechanisms. For the single server case, we will consider predictive wake-up and shut-down routines (c.f.~Figure~\ref{fig.category}). Such proactive control requires some prediction tools to predict traffic and offers improvements in delay. For the multi-server case,  we question the power efficiency of many conventional dispatching algorithms as most of them are not traditionally designed for power-efficient computing. We would like to understand the interplay and investigate the optimality between dispatching mechanisms and other system parameters. This will motivate some design guidances for power-efficient job dispatching routines.

\bibliographystyle{styles/IEEEtran}
\bibliography{myrefs}

% Generated by IEEEtran.bst, version: 1.13 (2008/09/30)
\begin{thebibliography}{10}
\providecommand{\url}[1]{#1}
\csname url@samestyle\endcsname
\providecommand{\newblock}{\relax}
\providecommand{\bibinfo}[2]{#2}
\providecommand{\BIBentrySTDinterwordspacing}{\spaceskip=0pt\relax}
\providecommand{\BIBentryALTinterwordstretchfactor}{4}
\providecommand{\BIBentryALTinterwordspacing}{\spaceskip=\fontdimen2\font plus
\BIBentryALTinterwordstretchfactor\fontdimen3\font minus
  \fontdimen4\font\relax}
\providecommand{\BIBforeignlanguage}[2]{{%
\expandafter\ifx\csname l@#1\endcsname\relax
\typeout{** WARNING: IEEEtran.bst: No hyphenation pattern has been}%
\typeout{** loaded for the language `#1'. Using the pattern for}%
\typeout{** the default language instead.}%
\else
\language=\csname l@#1\endcsname
\fi
#2}}
\providecommand{\BIBdecl}{\relax}
\BIBdecl

\bibitem{CostOfCloud}
A.~Greenberg, J.~Hamilton, D.~A. Maltz, and P.~Patel, ``The cost of a cloud:
  research problems in data center networks,'' \emph{ACM SIGCOMM Computer
  Communication Review}, vol.~39, pp. 68--73, Dec. 2008.

\bibitem{MathewSitaraman}
V.~Mathew, R.~K. Sitaraman, and P.~Shenoy, ``Energy-aware load balancing in
  content delivery networks,'' \emph{IEEE International Conference on Computer
  Communications (INFOCOM)}, pp. 954--962, Mar. 2012.

\bibitem{USEPA}
{US~Environmental Protection Agency -- Energy Star Program}, ``Report to
  {Congress} on server and data center energy efficiency public law 109-431,''
  Aug. 2007.

\bibitem{VermaAhuja}
A.~Verma, P.~Ahuja, and A.~Neogi, ``{pMapper:} power and migration cost aware
  application placement in virtualized systems,'' \emph{Proceedings of the
  ACM/IFIP/USENIX International Conference on Middleware}, pp. 243--264, Dec.
  2008.

\bibitem{MeisnerGold}
D.~Meisner, B.~T. Gold, and T.~F. Wenisch, ``{PowerNap}: eliminating server
  idle power,'' \emph{ACM Proceedings of International Conference on
  Architectural Support for Programming Languages and Operating Systems
  (ASPLOS)}, pp. 205--216, Mar. 2009.

\bibitem{BodikArmbrust}
P.~Bodik, M.~P. Armbrust, K.~Canini, A.~Fox, M.~Jordan, and D.~A. Patterson,
  ``A case for adaptive datacenters to conserve energy and improve
  reliability,'' Tech. Rep. UCB/EECS-2008-127, Sep. 2008.

\bibitem{BeloglazovBuyyaLee}
A.~Beloglazov, R.~Buyya, Y.~C. Lee, and A.~Zomaya, ``A taxonomy and survey of
  energy-efficient data centers and cloud computing systems,'' Tech. Rep.
  arXiv:1007.0066, Sep. 2010.

\bibitem{MeisnerSadler}
D.~Meisner, C.~M. Sadler, L.~A. Barroso, W.-D. Weber, and T.~F. Wenisch,
  ``Power management of online data-intensive services,'' \emph{ACM Proceedings
  of International Symposium on Computer Architecture (ISCA)}, pp. 319--330,
  Jun. 2011.

\bibitem{LangPatel}
W.~Lang and J.~M. Patel, ``Energy management for {MapReduce} clusters,''
  \emph{Proceedings of the VLDB Endowment}, vol.~3, pp. 129--139, Sep. 2012.

\bibitem{MedanBuyuktosunoglu}
N.~Madan, A.~Buyuktosunoglu, P.~Bose, and M.~Annavaram, ``A case for guarded
  power gating for multi-core processors,'' \emph{IEEE International Symposium
  on High Performance Computer Architecture (HPCA)}, pp. 291--300, Feb. 2011.

\bibitem{GandhiHarchol}
A.~Gandhi and M.~Harchol-Balter, ``How data center size impacts the
  effectiveness of dynamic power management,'' \emph{Allerton Conference on
  Communication Control and Computing}, pp. 1164--1169, Sep. 2011.

\bibitem{Neely}
M.~J. Neely, ``Lower power dynamic scheduling for computation systems,'' Tech.
  Rep. arXiv:1112.2797, Dec. 2011.

\bibitem{YaoHuang}
Y.~Yao, L.~Huang, A.~Sharma, L.~Golubchik, and M.~J. Neely, ``Data centers
  power reduction: a two time scale approach for delay tolerant workloads,''
  \emph{IEEE International Conference on Computer Communications (INFOCOM)},
  pp. 1431--1439, Mar. 2012.

\bibitem{NeelyTehrani}
M.~J. Neely, A.~S. Tehrani, and A.~G. Dimakis, ``Efficient algorithms for
  renewable energy allocation to delay tolerant consumers,'' \emph{IEEE
  International Conference on Smart Grid Communications (SmartGridComm)}, pp.
  549--554, Oct. 2010.

\bibitem{Welch}
P.~D. Welch, ``On the generalized {M/G/1} queuing process which the first
  customer of each busy period receives exceptional service,'' \emph{Operation
  Research}, vol.~12, pp. 736--752, Sep. 1964.

\bibitem{NelsonTantawi}
R.~Nelson and A.~Tantawi, ``Approximate analysis of fork/join synchronization
  in parallel queues,'' \emph{IEEE Transactions on Computers}, vol.~37, pp.
  739--743, Jun. 1988.

\bibitem{JoshiLiu}
G.~Joshi, Y.~Liu, and E.~Soljanin, ``Coding for fast content download,''
  \emph{Allerton Conference on Communication, Control and Computing}, Sep.
  2012.

\end{thebibliography}
\end{document}